\newtheorem{theorem}{Theorem}
\newtheorem{problem}{Problem}
\newtheorem{lemma}{Lemma}
\begin{document}

\title{On-Demand Multimedia Delivery in 6G: An Optimal-Cost Steiner Tree Approach}
\author{
\IEEEauthorblockN{
Zien Wang\IEEEauthorrefmark{1},
Xiucheng Wang\IEEEauthorrefmark{1},
Nan Cheng\IEEEauthorrefmark{1},
Wenchao Xu\IEEEauthorrefmark{2},
Wei Quan\IEEEauthorrefmark{3},
Ruijin Sun\IEEEauthorrefmark{1},
Conghao Zhou\IEEEauthorrefmark{1},
}
\IEEEauthorblockA{
\IEEEauthorrefmark{1}School of Telecommunications Engineering, Xidian University, Xi'an, 710071, China\\
\IEEEauthorrefmark{2}Division of Integrative Systems and Design, The Hong Kong University of Science and Technology, 999077, Hong Kong SAR\\
\IEEEauthorrefmark{3}School of Electronic and Information Engineering, Beijing Jiaotong University, Beijing, 100044, China\\
Email: \{zewang\_1, xcwang\_1\}@stu.xidian.edu.cn, \{dr.nan.cheng, dr.wei.quan, conghao.zhou\}@ieee.org,\\ wenchaoxu@ust.hk, sunruijin@xidian.edu.cn
}
}
    
    \maketitle

\IEEEdisplaynontitleabstractindextext

\IEEEpeerreviewmaketitle

\begin{abstract}
The exponential growth of multimedia data traffic in 6G networks poses unprecedented challenges for immersive communication, where ultra-high-definition, multi-quality streaming must be delivered on demand while minimizing network operational costs. Traditional routing approaches, such as shortest-path algorithms, fail to optimize flow multiplexing across multiple destinations, while conventional Steiner tree methods cannot accommodate heterogeneous quality-of-service (QoS) requirements—a critical need for 6G’s personalized services. In this paper, we address a fundamental but unsolved challenge: the minimum flow problem (MFP) with multi-destination, heterogeneous outflow demands, which is pivotal for efficient multimedia distribution such as adaptive-resolution video streaming. To overcome the limitations of existing methods, we propose a two-stage dynamic programming-enhanced On-demand Steiner Tree (OST) algorithm, the first approach that jointly optimizes flow aggregation and QoS-aware path selection for arbitrary outflow requirements. We rigorously prove the optimality of OST using mathematical induction, demonstrating that it guarantees the minimum-cost multicast flow under differentiated service constraints. Extensive experiments in 6G-like multimedia transmission scenarios show that OST reduces total network flow by over 10\% compared to state-of-the-art methods while ensuring on-demand QoS fulfillment. The complete code is available at \texttt{https://github.com/UNIC-Lab/OST}.
\end{abstract}

\begin{IEEEkeywords}
minimum flow problem, on-demand, dynamic programming, Steiner tree.
\end{IEEEkeywords}

\section{Introduction}
The emergence of 6G networks is driving a paradigm shift in multimedia services, enabling bandwidth-intensive immersive applications such as holographic telepresence (requiring 1-10 Tbps/km² traffic density\cite{Menglan}), real-time volumetric video (with 100+ Gbps peak rates)\cite{DBLP:journals/corr/abs-1906-00741}, and multi-sensory extended reality (XR)\cite{Shen,sun2025comprehensive}, which further amplifies the complexity of QoS guarantees in mobile environments \cite{Stafidas2024}. This evolution is accompanied by an explosive growth in global multimedia traffic, with the live streaming market alone projected to grow from \$87.3 billion in 2023 to \$3.718 trillion by 2030 \cite{latreche2025applicationsenvisagednewgeneration}. However, these transformative services impose unprecedented challenges on network infrastructure for personality and economy\cite{radiodiff,wang2025radiodiff}. The requirement to support heterogeneous quality-of-service (QoS) demands where multiple users may request drastically different data resolutions such as , 8K vs. 360p from the same multicast session\cite{Mehran}. Meanwhile, the critical operational constraint of minimizing network flow to reduce energy consumption and infrastructure costs in increasingly complex 6G topologies\cite{Chukhno}. Traditional routing approaches fundamentally fail to address this trilemma of throughput scalability, QoS personalization, and cost efficiency. Shortest-path algorithms, while providing baseline connectivity, incur up to 47\% redundant flowin multi-destination scenarios due to their inability to exploit flow multiplexing \cite{Li2006/01/01}. Although conventional Steiner tree methods are theoretically optimal for homogeneous multicast, they become severely suboptimal\cite{Risso} when confronted with heterogeneous outflow requirements typical of 6G multimedia services. This limitation represents a critical bottleneck for next-generation networks, where the capability to deliver on-demand, quality-adaptive content with minimum resource expenditure will directly determine the economic viability of immersive services\cite{Bhattacharya}.

To bridge the fundamental gap between personalized QoS delivery and network efficiency in 6G multimedia distribution, we propose a novel On-demand Steiner Tree (OST) algorithm that fundamentally rethinks multicast flow optimization by simultaneously addressing flow minimization and quality adaptation through three key innovations. The OST framework introduces an adaptive tree construction mechanism employing a two-stage dynamic programming approach: first, decomposing the network into QoS-homogeneous subgraphs via outflow requirement clustering, then optimally connecting these subgraphs through minimum-cost Steiner junctions. Moreover, a mathematical proof is given that the proposed OST can obtain the optimal transmission flow with minimum cost and 100\% satisfaction of heterogeneous resolution requests. The main contributions of this paper are summarized as follows.
\begin{enumerate}
    \item To our best knowledge, we establish the first formal framework for multicast flow optimization under heterogeneous QoS constraints in 6G networks, introducing the Outflow-Constrained Minimum Flow Problem (OCMFP) as a novel mathematical formulation that captures the critical trade-off between personalized service delivery and network efficiency.
    \item Based on the Steiner tree method an on-demand flow minimization method, named OST, is proposed, which is a dynamic programming-enhanced solution that not only guarantees provably minimum flow (as established through rigorous mathematical induction) but also achieves unprecedented adaptability to diverse user requirements through its two-stage QoS-aware path construction mechanism.
    \item Experiment results demonstrate that our proposed OST method reduces network traffic by over 10\% compared to traditional flow optimization algorithms.
\end{enumerate}

\section{Outflow Constrained Minimum Flow Problem}
In this paper, we address a specialized class of the minimum flow problem, where a single source node transmits flow to multiple destination nodes with varying outflow requirements, denoted by $x$, within a network. This problem is modeled as an undirected weighted graph $\mathcal{G} = \{\mathcal{V}, \mathcal{E}\}$, where $\mathcal{V}$ represents the set of all nodes in the network. The node set is divided into source nodes $\mathcal{V}_{s}$, destination nodes $\mathcal{V}_{d}$, and general routing nodes $\mathcal{V}_{r}$, such that $\mathcal{V} = \{\mathcal{V}_{s}, \mathcal{V}_{d}, \mathcal{V}_{r}\}$, with cardinalities $|\mathcal{V}| = M$, $|\mathcal{V}_{d}| = K$, and $|\mathcal{V}_{s}| = 1$. The flow from node $i$ to node $j$ is denoted by $f_{(i,j)}$, and the weight of the link $e_{(i,j)}$ represents the unit cost of transmitting flow from node $i$ to node $j$.

Therefore, the OCMFP can be formulated as follows.
\begin{problem}\label{p1}
    \begin{align}
        &\min_{\bm{f}} \sum_{i\in \mathcal{V}}\sum_{j\in \mathcal{V}} e_{(i,j)}f_{(i,j)},\label{obj}\\
        &s.t.\;  \max_{j\in\mathcal{V}} f_{(i,j)} \leq \max_{k\in\mathcal{V}} f_{(i,j)}, &&\forall i\in\mathcal{V}/\mathcal{V}_{s},\tag{\ref{obj}a}\label{c1}\\
        &\quad\;\; \max_{j\in \mathcal{V}} f_{(i,j)}\geq \max_{k} x^{k},&&\forall k\in\mathcal{V}_{d} \wedge i \in\mathcal{V}_{s},\tag{\ref{obj}b}\label{c2}\\
        &\quad\;\;x^{j}\leq \max_{i\in \mathcal{V}\setminus\mathcal{V}_{d}} f_{(i,j)}, &&\forall j\in\mathcal{V}_{d} ,\tag{\ref{obj}c}\label{c3}\\
        &\quad\;\; f_{(i,j)} \geq 0, &&\forall i,j \in \mathcal{V},\tag{\ref{obj}d}\label{c4}\\
        &\quad\;\; f_{(i,j)} = 0, &&\forall e_{(i,j)}\notin \mathcal{E},\tag{\ref{obj}e}\label{c5}
    \end{align}
\end{problem}
\noindent The objective of Problem \ref{p1} is to minimize the weight summation flow within the network. Constraint \eqref{c1} ensures that for each node in the graph, except for the source node, the outflow does not exceed the inflow. Constraint \eqref{c2} guarantees that the outflow from the source node exceeds the maximum inflow requirements of the destination nodes, thereby satisfying all inflow demands. Constraint \eqref{c3} ensures that the inflow at each destination node is at least equal to its specified inflow requirement. Constraints \eqref{c4} and \eqref{c5} ensure non-negative flow within the network and restrict flow to the links that exist within the graph. The optimal solution to Problem \ref{p1} adheres to the following theorem.
\begin{theorem}\label{theorem-1}
    For links carrying flow, i.e., where $f_{(i,j)} > 0$ and $e_{(i,j)} \in \mathcal{E}$, these links form a tree structure with the source node as the root and all destination nodes as the leaf nodes. The flow on each link follows the direction from the root node to the leaf nodes, corresponding to increasing depth in the tree. The minimum flow that satisfies the outflow requirements of all destination nodes must conform to this tree structure.
\end{theorem}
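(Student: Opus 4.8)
The plan is to characterize any optimal solution of Problem~\ref{p1} through a sequence of structure-preserving modifications that can only decrease the objective \eqref{obj}, and then to show that the only configurations immune to such modifications are spanning trees rooted at the source. Throughout I assume the edge weights $e_{(i,j)}$ are strictly positive, so deleting any positive flow strictly reduces the cost; this makes every optimal $\bm{f}$ \emph{edge-minimal}, meaning no positive-flow link $f_{(i,j)}>0$ can be reset to zero while retaining feasibility.

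First I would establish connectivity of the support. By \eqref{c3} every destination $j\in\mathcal{V}_d$ has an incoming link with $f_{(i,j)}>0$; applying \eqref{c1} to that predecessor (its maximum inflow bounds its maximum outflow) produces a further upstream positive-flow link, and iterating — using that the source is the unique node exempt from \eqref{c1} — forces a chain of positive-flow links from $\mathcal{V}_s$ to each destination. Hence the support $H=\{e_{(i,j)}:f_{(i,j)}>0\}$ connects the source to all of $\mathcal{V}_d$, and \eqref{c2} is automatically consistent with such a chain carrying the peak demand $\max_k x^k$.

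The core step is to prove that in an optimal solution every non-source node has exactly one incoming positive-flow link. Suppose a node $v\notin\mathcal{V}_s$ had two such links; because the conservation \eqref{c1} is expressed through the \emph{maximum} inflow rather than an additive balance, retaining only a single incoming link of maximum flow leaves $\max_k f_{(k,v)}$ unchanged, so \eqref{c1} for $v$ and — if $v\in\mathcal{V}_d$ — the demand \eqref{c3} still hold, while the deleted positive flow strictly lowers \eqref{obj}, contradicting optimality. In-degree one together with connectivity makes $H$ a spanning tree rooted at the source; orienting each edge from parent to child, any link directed toward the root would give some ancestor a second incoming link or close a directed cycle, contradicting in-degree one, so all flow moves from the root toward increasing depth. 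The destinations-as-leaves claim is then extracted from the specific form of \eqref{c3}: a destination's demand must be met by inflow from a \emph{non-destination} predecessor, so its unique parent is not a destination, and under the model's premise that a destination is a pure sink it has no children, hence it is a leaf. A clean way to package the whole result is to observe that on the resulting tree the minimal feasible flow on each edge equals the maximum demand among the destinations in the subtree it feeds, which both certifies feasibility and pins down the cost.

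I expect the main obstacle to be the feasibility bookkeeping in the core step, precisely because the conservation law uses a maximum rather than an additive balance and because deletions cascade upstream — trimming an incoming link reduces a predecessor's supported outflow, which relaxes but also perturbs its own ancestors' constraints. The argument must therefore be organized so the modifications are applied consistently — for example by processing nodes from the destinations upward, or equivalently by invoking the ``maximum downstream demand'' accounting above — to guarantee that removing a redundant parent never drops any node's maximum inflow below its maximum outflow, nor the source's output below the $\max_k x^k$ required by \eqref{c2}. An alternative route, matching the induction flavor emphasized elsewhere in the paper, is to induct on $K=|\mathcal{V}_d|$: the base case $K=1$ is a single shortest path, and the inductive step grafts the highest-demand destination's path onto the tree for the remaining destinations, reusing shared edges through the same max-inflow sharing.
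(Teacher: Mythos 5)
Your proposal is correct in overall strategy -- like the paper, it is an improvement-by-deletion argument showing that any optimal solution whose support is not a source-rooted tree can be strictly improved -- but the way you decompose the argument is genuinely different and in places cleaner. The paper handles non-tree structure by a case analysis: delete flow in components disconnected from the source, delete the edge entering the cycle node nearest the source when a cycle exists, delete an edge on one of two disjoint $u$--$v$ paths, and prune dangling non-destination leaves back to the nearest branching ancestor. You replace the cycle and parallel-path cases with a single unifying lemma -- every non-source node has in-degree one in the support -- which works precisely because the conservation law \eqref{c1} is a maximum rather than an additive balance, so dropping all but the largest incoming link preserves feasibility while strictly reducing cost (given positive edge weights, an assumption you state explicitly and the paper leaves implicit). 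Your connectivity argument also runs in the opposite direction: you trace positive-flow links backward from each destination, whereas the paper deletes unreachable components. Your explicit observation that the minimal feasible flow on each tree edge equals the maximum demand in the subtree it feeds is a nice certificate that the paper defers to its subsequent Lemma.

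Two soft spots, both of which the paper's own proof shares, are worth patching. First, the backward-tracing chain from a destination need not reach the source: because \eqref{c1} compares maxima rather than sums, a self-sustaining cycle of equal positive flows satisfies every constraint and could ``feed'' a destination with no connection to $\mathcal{V}_s$; ruling this out requires either an explicit flow-decomposition/origination assumption or combining the trace with your in-degree-one lemma and a separate argument that such an isolated cycle-plus-tail cannot be optimal. Second, when the node $v$ with two parents is itself a destination, you must retain the largest incoming link \emph{from a non-destination predecessor}, since \eqref{c3} takes its maximum over $\mathcal{V}\setminus\mathcal{V}_d$ only; keeping the overall maximum could break \eqref{c3} if that maximum arrives from another destination. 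Neither issue is fatal, and your treatment of the ``destinations are leaves'' direction (attributing it to destinations being pure sinks) is actually more honest than the paper, which proves only that every leaf is a destination and not the converse.
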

\begin{proof}
    The tree structure described in Theorem \ref{theorem-1} obviously satisfies constraints \eqref{c2}, \eqref{c4}, and \eqref{c5}. To satisfy constraints \eqref{c1} and \eqref{c3}, there must be a path between the source node and all destination nodes. The flow-carrying links and their corresponding nodes can form some sub-graphs of the original graph, if there is a disconnected sub-graph, then any connected component of this sub-graph that is not connected to the source node cannot be connected to all destination nodes. By eliminating the flows in this isolated component, the summation flow can be reduced while guaranteeing the constraints. Assuming that there is a sub-graph that is not a tree, if a cycle exists within the sub-graph, let $v$ be the node in the cycle closest to the source. Removing the flow-carrying link on the cycle that flows to $v$ still satisfies the constraint conditions, yielding a better solution. Similarly, if there are distinct nodes $u$ and $v$ connected by multiple disjoint paths, removing any flow-carrying link along one of these paths will still satisfy the constraint conditions, resulting in an improved solution. Finally, if the tree contains a leaf node $v$ that is not a destination node, the path from its deepest ancestor with the largest number of children to $v$ can be deleted without violating the constraint conditions. This further optimizes the solution. Thus, the optimal solution to Problem \ref{p1} must take the form of the tree structure specified in Theorem \ref{theorem-1}.
\end{proof}

\begin{lemma}
    In the optimal solution of Problem \ref{p1}, if a link is multiplexly used to transmit the flow from the source node to multiple destination nodes, the flow size on this link equals the highest outflow demand among those destination nodes.
\end{lemma}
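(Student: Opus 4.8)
The plan is to build directly on Theorem~\ref{theorem-1}, which already tells us that any optimal solution of Problem~\ref{p1} is a tree rooted at the source, with the destinations as leaves and flow directed from root to leaves. I would fix a multiplexed link $e_{(i,j)}$ and let $\mathcal{D}\subseteq\mathcal{V}_d$ denote the set of destinations lying in the subtree hanging below $e_{(i,j)}$; because the solution is a tree, these are exactly the destinations whose source-to-destination path traverses $e_{(i,j)}$, and ``multiplexed'' precisely means $|\mathcal{D}|\ge 2$. The target is then to show $f_{(i,j)}=\max_{t\in\mathcal{D}}x^{t}$, which I would do by a matching lower bound and upper bound followed by an optimality argument.

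For the lower bound, let $d^{\star}\in\mathcal{D}$ be a downstream destination attaining the largest demand $\max_{t\in\mathcal{D}}x^{t}$. Since the solution is a tree, the path from the source to $d^{\star}$ is unique and passes through $e_{(i,j)}$. At the leaf $d^{\star}$, constraint~\eqref{c3} forces its single incoming link to carry at least $x^{d^{\star}}$. Moving up this path one node at a time, every internal node has a unique parent link (its sole inflow), so constraint~\eqref{c1} --- max outflow $\le$ max inflow --- implies that the parent link's flow is at least the flow on the child link lying on the path toward $d^{\star}$. Chaining these inequalities from $d^{\star}$ up to $e_{(i,j)}$ yields $f_{(i,j)}\ge x^{d^{\star}}=\max_{t\in\mathcal{D}}x^{t}$.

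For the upper bound and optimality, I would exhibit the assignment that sets every link's flow equal to the maximum demand among its own downstream destinations and verify its feasibility. The key check is constraint~\eqref{c1}: at any node the downstream-destination set of the parent link is the union of those of its child links, so the parent flow equals the maximum of the child flows, i.e.\ max inflow $=$ max outflow, satisfying~\eqref{c1} with equality; constraints~\eqref{c2}, \eqref{c3}, \eqref{c4}, and~\eqref{c5} then follow immediately from the tree structure. Since all edge costs are nonnegative and this assignment is the pointwise-smallest feasible flow, any feasible solution carrying strictly more than $\max_{t\in\mathcal{D}}x^{t}$ on $e_{(i,j)}$ would incur strictly larger cost and could be improved, contradicting optimality. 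Hence in the optimal solution $f_{(i,j)}=\max_{t\in\mathcal{D}}x^{t}$.

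The main obstacle I anticipate is correctly handling the multicast/replication semantics encoded by the max-based conservation constraint~\eqref{c1}: one must argue that, along a root-to-leaf path, the single-parent property of the tree turns the per-node inequality into clean monotonicity of the flow, and simultaneously that the pointwise-minimum assignment is globally feasible rather than merely feasible link-by-link. Making the identification of $\mathcal{D}$ with the subtree rigorous is the remaining bookkeeping, but it follows directly from the root-to-leaf orientation guaranteed by Theorem~\ref{theorem-1}.
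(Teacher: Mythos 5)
Your lower-bound argument is essentially the paper's entire proof: the paper argues (in contrapositive form) that if $f_{(i,j)}$ were below the largest downstream demand, then by Theorem~\ref{theorem-1} and constraint~\eqref{c1} the flow reaching the most demanding destination could not meet its requirement --- exactly your chain of inequalities along the unique root-to-leaf path. Where you go further is the upper-bound half: the paper never argues that the flow cannot \emph{exceed} the maximum downstream demand, leaving the claimed equality only half-proved, whereas you close this by exhibiting the pointwise-minimal feasible assignment (each link carrying the max demand of its own subtree, which satisfies~\eqref{c1} with equality) and invoking optimality. That addition is genuinely needed for the lemma as stated, so your proof is the more complete one. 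One small caveat: your ``strictly larger cost'' step requires the edge weight $e_{(i,j)}$ to be strictly positive, not merely nonnegative as you write --- on a zero-cost link an optimal solution could carry excess flow with no penalty, so equality on such links only holds for \emph{some} optimal solution rather than every one. Under the paper's setting ($e_{(i,j)}\sim U(0,1)$, so almost surely positive) this is harmless, but it is worth stating the positivity assumption explicitly.
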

\begin{proof}
    If the flow size of this link is smaller than the highest outflow demand among those destination nodes, then according to Theorem \ref{theorem-1} and constraint \eqref{c1}, the size of the flow passing to the destination node with the highest outflow requirements cannot meet its outflow demand.
\end{proof}

\section{On-demand Steiner Tree Method}
\begin{figure*}
    \centering
    \includegraphics[width=0.965\linewidth]{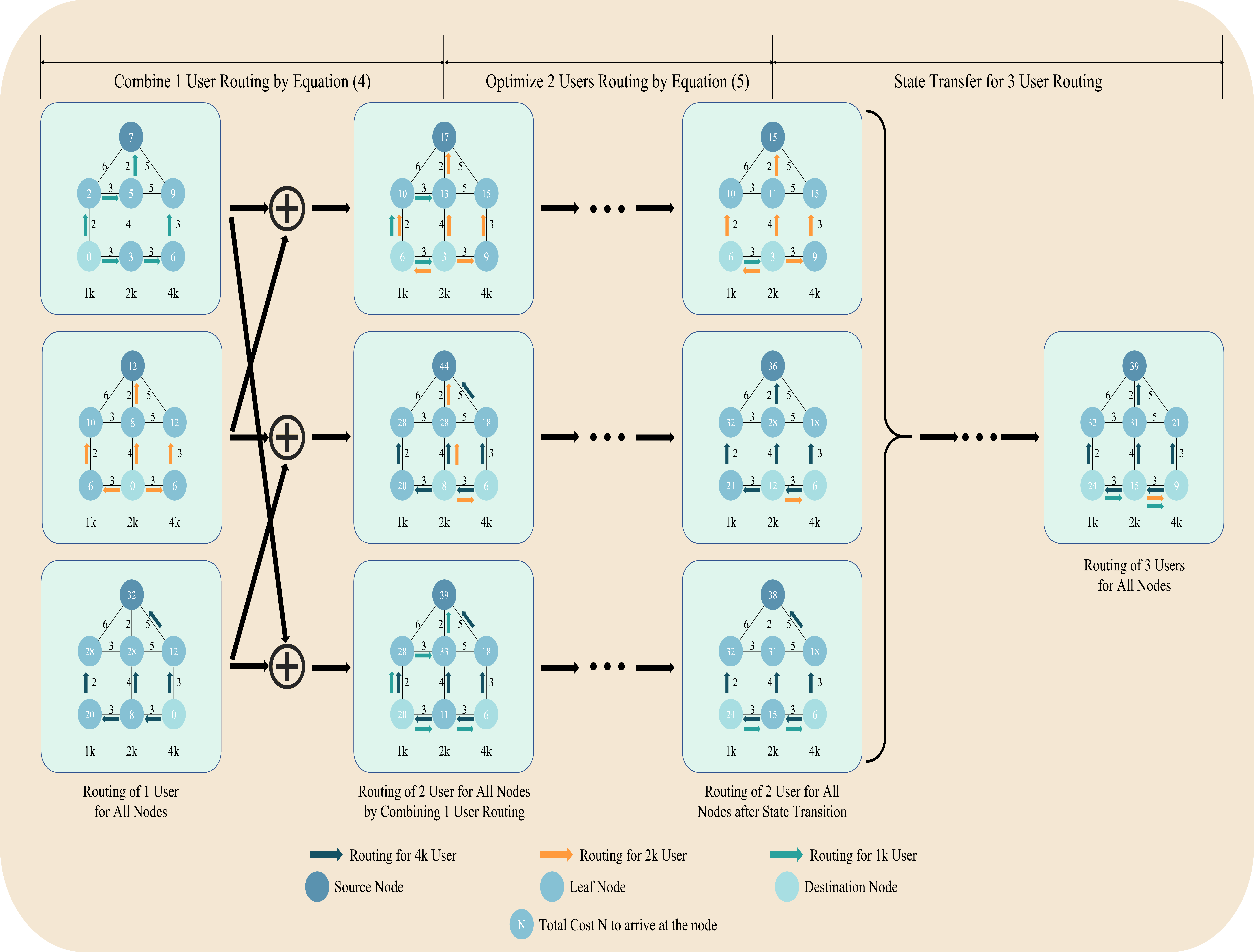}
    \caption{{Application of OST algorithm.}}
    \label{fig-system}
\end{figure*}
In order to resolve OCMFP effectively, the proposed OST is based on dynamic programming. OST obeys the following state transition equations.  
\begin{equation}
    \mathcal{H}_{(d,\{d\})} = 0 \quad d \in \mathcal{V}_d
    \label{e1}
\end{equation}
\begin{equation}
    \mathcal{H}_{(v,\emptyset)} = +\infty \quad v \in \mathcal{V}
    \label{e2}
\end{equation}
\begin{equation}
    \mathcal{H}_{(v,\mathcal{S})} = \min_{\mathcal{F} \subseteq \mathcal{S}} \left\{ \max \left( \mathcal{H}_{(v,\mathcal{F})}, \mathcal{H}_{(v,\mathcal{S} \setminus \mathcal{F})} \right) \right\} \quad v \in \mathcal{V}, S \subseteq \mathcal{V}_d
    \label{e3}
\end{equation}
\begin{equation}
    \mathcal{H}_{(v,\mathcal{S})} = \min_{(v,u) \in \mathcal{E}} \left\{ \mathcal{H}_{(u,\mathcal{S})} + \max_{j \in S} \{ x^j \}e_{(v,u)} \right\} \quad v \in \mathcal{V}, S \subseteq \mathcal{V}_d
    \label{e4}
\end{equation}
where $\mathcal{H}_{(v,\mathcal{S})}$ is the total flow from node $v$ to destination nodes set $\mathcal{S}$. 

Equation \eqref{e1} and equation \eqref{e2} define specific boundary essential for the link selection process. In detail, equation \eqref{e1} indicates that the flow of each source node to itself is zero, ensuring that no additional cost is added when the destination node is reached. Equation \eqref{e2} represents that the flow is set to infinity for a node $v$ when there is no source to transmit to it.

\begin{theorem}
    Set $\delta_{(v,\mathcal{S})}$ as the theoretical minimum flow from node $v$ to destination nodes set $\mathcal{S}$. Through the above four state transition functions, the $\mathcal{H}_{(v,\mathcal{S})}=\delta_{(v,\mathcal{S})}$ can be obtained.
\end{theorem}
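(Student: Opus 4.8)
The plan is to prove the identity $\mathcal{H}_{(v,\mathcal{S})} = \delta_{(v,\mathcal{S})}$ for every $v \in \mathcal{V}$ and every $\mathcal{S} \subseteq \mathcal{V}_d$ by strong induction on the cardinality $|\mathcal{S}|$, establishing the two inequalities $\mathcal{H}_{(v,\mathcal{S})} \geq \delta_{(v,\mathcal{S})}$ and $\mathcal{H}_{(v,\mathcal{S})} \leq \delta_{(v,\mathcal{S})}$ separately. Throughout, I would lean on Theorem~\ref{theorem-1}, which guarantees that any minimum-flow configuration serving $\mathcal{S}$ from $v$ is a tree rooted at $v$ with the nodes of $\mathcal{S}$ as leaves, and on the accompanying Lemma, which fixes the flow on a multiplexed link to the largest demand $\max_{j\in\mathcal{S}} x^j$ among the destinations it feeds --- exactly the quantity appearing as the edge weight in \eqref{e4}. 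The base cases are supplied directly by \eqref{e1} and \eqref{e2}: a singleton $\mathcal{S}=\{d\}$ with $v=d$ incurs zero flow, while the empty-target value $+\infty$ acts as a sentinel that rules out the degenerate partitions $\mathcal{F}=\emptyset$ and $\mathcal{F}=\mathcal{S}$ inside \eqref{e3}.

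For the direction $\mathcal{H}_{(v,\mathcal{S})} \geq \delta_{(v,\mathcal{S})}$ I would argue \emph{realizability}: every finite value produced by \eqref{e3} or \eqref{e4} can be exhibited as the cost of an actual feasible tree, so it cannot undercut the true optimum $\delta_{(v,\mathcal{S})}$. Concretely, a value obtained from \eqref{e4} is realized by prepending the edge $(v,u)$ --- carrying the multiplexed flow $\max_{j\in\mathcal{S}} x^j$ --- to the inductively feasible tree for $(u,\mathcal{S})$, while a value from \eqref{e3} is realized by identifying the roots of the two feasible subtrees for $(v,\mathcal{F})$ and $(v,\mathcal{S}\setminus\mathcal{F})$ at $v$. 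For the reverse direction $\mathcal{H}_{(v,\mathcal{S})} \leq \delta_{(v,\mathcal{S})}$ I would take an optimal tree $T$ for $(v,\mathcal{S})$ guaranteed by Theorem~\ref{theorem-1} and decompose it at the root: if $v$ has a single downward edge $(v,u)$ in $T$, then deleting it leaves an optimal tree for $(u,\mathcal{S})$ and the cost splits exactly as in \eqref{e4}; if $v$ is a branching point, partitioning its subtrees induces a split $\mathcal{S}=\mathcal{F}\cup(\mathcal{S}\setminus\mathcal{F})$ whose two parts are themselves optimal for their sub-demands, matching \eqref{e3}. Invoking the inductive hypothesis on these strictly smaller or structurally simpler subproblems then yields the inequality.

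The main obstacle is the intra-layer coupling introduced by \eqref{e4}: unlike \eqref{e3}, it relates $\mathcal{H}_{(v,\mathcal{S})}$ to $\mathcal{H}_{(u,\mathcal{S})}$ at the \emph{same} demand set $\mathcal{S}$, so within a fixed cardinality level the recursion is not a simple acyclic descent but a fixed point over the whole graph that behaves like a single-source shortest-path relaxation with edge lengths $\max_{j\in\mathcal{S}} x^j\, e_{(v,u)}$. The delicate point is to show that interleaving the edge-extension step \eqref{e4} and the partition step \eqref{e3} until stabilization actually reaches the minimum rather than a spurious fixed point; this requires exploiting the non-negativity of the edge costs, so that no cost-decreasing cycle can arise and a Dijkstra-/Bellman-Ford-style convergence argument applies, and verifying that the branching values injected by \eqref{e3} serve as correct initial labels for that relaxation. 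Handling this coupling carefully --- rather than the routine base-case and decomposition bookkeeping --- is where the substance of the proof lies.
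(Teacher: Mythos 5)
Your plan coincides with the paper's proof in all essentials: both proceed by induction on the size of the destination set, reduce the base case to a single-source shortest-path computation, decompose an optimal tree at its root into the branching case \eqref{e3} and the edge-extension case \eqref{e4}, and resolve the intra-layer coupling of \eqref{e4} by a Dijkstra-style ``at each stage some node already carries the correct label'' convergence argument over the non-negative edge costs. Your explicit split into a realizability direction ($\mathcal{H}_{(v,\mathcal{S})}\geq\delta_{(v,\mathcal{S})}$) and a decomposition direction ($\mathcal{H}_{(v,\mathcal{S})}\leq\delta_{(v,\mathcal{S})}$) is simply a cleaner articulation of the sandwich inequality the paper writes as $\delta_{(v,\mathcal{V}_d)}\leq\mathcal{H}_{(v,\mathcal{V}_d)}\leq\min_{\mathcal{F}\subseteq\mathcal{V}_d}\{\cdots\}$, so the two arguments are substantively the same.
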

\begin{proof}
    When $|\mathcal{V}_d|=1$, which means that there is only one destination node. Set $\mathcal{V}_d = \{a\}$. Then The problem then degenerates into a single-source shortest path problem on $\mathcal{G}$. The transition described in Equation \eqref{e3} becomes irrelevant because either $\{a\}\setminus\mathcal{F}$ or $\mathcal{F}$ itself will inevitably include an empty set. Equation \eqref{e4} consequently recduces to the standard update formula for shortest paths. It can be easily proven that within $\mathcal{G}$, $\mathcal{H}_{(v,\mathcal{S})} = \delta_{(v,\mathcal{S})}$ holds true.

    Assume that the conclusion holds when $|\mathcal{V}_d|=k$. Then, when $|\mathcal{V}_d|=k+1$, for any $\mathcal{F}\subseteq\mathcal{V}_d$ with $|\mathcal{F}|\leq k$, it can be easily proven that $\delta_{(v,\mathcal{F})}$ remains consistent when T is considered as $\mathcal{V}'_d$ with smaller number of destination nodes, $|\mathcal{V}'_d|$. Therefore, $\mathcal{H}_{(v,\mathcal{F})}=\delta_{(v,\mathcal{F})}$ holds true. Thus, it only remains to prove that $\mathcal{H}_{(v,\mathcal{V}_d)}=\delta_{(v,\mathcal{V}_d)}$.
    
    From Equation \eqref{e3}, we have 
    \begin{equation}
     \delta_{(v,\mathcal{V}_d)}\leq \mathcal{H}_{(v,\mathcal{V}_d)} \leq \min_{\mathcal{F} \subseteq \mathcal{V}_d} \left\{ \max \left( \delta_{(v,\mathcal{F})}, \delta_{(v,\mathcal{V}_d \setminus \mathcal{F})} \right) \right\}
    \label{e5}
    \end{equation}
    and there exists at least one node $v$ for which both inequalities hold as equalities. Denote such a node as $v'$. Starting from $v'$ and updating its adjacent nodes using Equation \eqref{e4}, we obtain 
    \begin{equation}
     \delta_{(v,\mathcal{V}_d)} \leq \mathcal{H}_{(v,\mathcal{V}_d)} \leq \min_{(v',v) \in \mathcal{E}} \left\{ \delta_{(v,\mathcal{V}_d)} + \max_{j \in \mathcal{V}_d} \{ r_j \}e_{(v',v)} \right\}
    \label{e5}
    \end{equation}
    and again, there exists at least one node $v$ for which both inequalities hold as equalities. Denote such a node as $v''$. Repeating this process until $\forall v\in \mathcal{V}$, we obtain $\mathcal{H}_{(v,\mathcal{V}_d)}=\delta_{(v,\mathcal{V}_d)}$, thus proving the conclusion.
\end{proof}
The details of the proposed OST are specified in Algorithm \ref{alg:sh} and Algorithm \ref{alg:video_transmission} where $\mathcal{T}$ is links set with flows. The Shortest Video Transmission Path algorithm is tailored for establishing the most efficient routing paths for video transmission across a network. The procedure initiates at a source node \( u \), iterating through each neighboring node \( v \). For every neighbor, the algorithm seeks to augment the current path \( \mathcal{T}_{(u,\mathcal{V}^{s}_{d})} \) by integrating the edge unit cost \( e^v_u \) and updating the flow \( f^{tmp} \). Should the cumulative flows along this newly proposed path configuration remain below the predefined threshold \( \mathcal{H}_{(v,\mathcal{V}^{s}_{d})} \), both the path and flow are updated to reflect this more optimal arrangement. The function recursively proceeds to each neighboring node \( v \) that does not coincide with the source set \( \mathcal{V}_s \), culminating in the return of the most efficient path configuration accompanied by the corresponding flow rates and associated costs.

\begin{algorithm}[h]
    \caption{Shortest Algorithm for Video Transmission}
    \label{alg:sh}
    \renewcommand{\algorithmicrequire}{\textbf{Function}}
    \begin{algorithmic}[1]
        \REQUIRE ShortestVideo$(u,\mathcal{V}^{s}_{d})$
        \FOR{$v\in\mathcal{N}(u)$}
            \STATE $\mathcal{T}^{tmp} = \mathcal{T}_{(u,\mathcal{V}^{s}_{d})}\bigcup\{e_{(v,u)}\}$
            \STATE $f^{tmp} = f_{(u,\mathcal{V}^{s}_{d})}$
            \STATE $f^{tmp}_{(v,u)} = x^{max} e_(v,u)$
            \IF{$\sum_{e_(j,k) \in \mathcal{T}^{tmp}} f^{tmp}_{(j,k)}\leq \mathcal{H}_{(v,\mathcal{V}^{s}_{d})}$}
                \STATE $\mathcal{H}_{(v,\mathcal{V}^{s}_{d})} = \sum_{e_(j,k) \in \mathcal{T}^{tmp}} f^{tmp}_{(j,k)}$;
                \STATE $\mathcal{T}_{(v,\mathcal{V}^{s}_{d})} = \mathcal{T}^{tmp}$
                \STATE $f_{(v,\mathcal{V}^{s}_{d})} = f^{tmp}$
            \ENDIF
            \IF{$v \neq \mathcal{V}_s$}
                \STATE $\mathcal{T}$, $f$, $\mathcal{H} = $ ShortestVideo$(v,\mathcal{V}^{s}_{d})$
            \ENDIF
        \ENDFOR       
        \RETURN $\mathcal{T}$, $f$, $\mathcal{H}$;
    \end{algorithmic}
\end{algorithm}

The \textit{Optimize Video Transmission Paths} algorithm adopts a dynamic programming methodology to optimize video routing efficiency within a network by initializing the path array \( \mathcal{T} \), flow \( f \), and cost \( \mathcal{H} \) to default states, and setting the transmission cost to zero for each destination node \( v \). Subsequently, the algorithm systematically examines every subset \( \mathcal{V}^{s}_{d} \) of the destination nodes, prioritizing subsets based on their cardinality. For each subset, all feasible non-empty combinations \( \hat{\mathcal{J}} \) are evaluated in order of increasing size. Each combination undergoes an assessment involving path merging and computation of the cumulative flow \( f^{tmp} \) and cost \( c \). If this calculated cost \( c \) is less than the currently recorded minimum, the corresponding path and cost values are updated. Throughout this iterative refinement, each node \( v \) is recursively explored using the \textit{Shortest Video} function, guaranteeing an exhaustive examination of all feasible routing paths before finalizing the optimal configurations, associated flows, and costs from source nodes \( \mathcal{V}_s \) to destination nodes \( \mathcal{V}_d \).

Both the time and space complexities of the Optimize Video Transmission Paths algorithm scale exponentially in the size of the target node set $\mathcal{V}_d$. The primary computational burden of the Optimize Video Transmission Paths algorithm arises from the enumeration and processing of non-empty subsets of the target node set, $\mathcal{V}_d$. Initially, all non-empty subsets of $\mathcal{V}_d$ are considered, whose total number is $2^{|\mathcal{V}_d|} - 1$. For each such subset $\mathcal{V}^{s}_{d}$, the algorithm further enumerates its non-empty sub-subsets, which in the worst case can also be on the order of $2^{|\mathcal{V}_d|}$ iterations. Within these nested loops, for every node $v \in \mathcal{V}$ the algorithm performs state updates and invokes the \textsc{ShortestVideo} routine; assuming the latter operates in $O(|\mathcal{E}|)$ time in a sparse graph, the overall per-subset cost becomes $O(|\mathcal{V}| \cdot 2^{|\mathcal{V}_d|} + |\mathcal{V}| \cdot |\mathcal{E}|)$. Aggregating over all $2^{|\mathcal{V}_d|}$ subsets, the worst-case time complexity of the algorithm can be summarized as \[ O\Bigl(|\mathcal{V}| \cdot 2^{2|\mathcal{V}_d|} + |\mathcal{V}| \cdot |\mathcal{E}| \cdot 2^{|\mathcal{V}_d|}\Bigr), \] showing an exponential growth in time with respect to the number of target nodes.

In terms of space, the algorithm must maintain optimal state information for each node $v \in \mathcal{V}$ with respect to every non-empty subset of target nodes $\mathcal{V}^{s}_{d} \subseteq \mathcal{V}_d$, which is $\mathcal{T}_{(v,\mathcal{V}^{s}_{d})}$. This involves storing path sets, flow variables, and cumulative metrics for each combination. Since there are up to $2^{|\mathcal{V}_d|} - 1$ such subsets, the required storage per node becomes $O(2^{|\mathcal{V}_d|})$, leading to an overall space complexity of 
\[
O\bigl(|\mathcal{V}| \cdot 2^{|\mathcal{V}_d|}\bigr).
\]

\begin{algorithm}[h]
    \caption{Optimize Video Transmission Paths}
    \label{alg:video_transmission}
    \renewcommand{\algorithmicrequire}{\textbf{Function:}}
    \renewcommand{\algorithmicensure}{\textbf{Output:}}
    \begin{algorithmic}[1]
        \REQUIRE DpVideo()
        \STATE Set each element in $\mathcal{T}$, $f$, and $\mathcal{H}$ as $\emptyset$, $0$, and $+\infty$;
        \FOR{$v\in \textbf{enumerate}(\mathcal{V}_d)$}
            \STATE $\mathcal{H}_{(v,\{v\})} = 0$;
        \ENDFOR
        \STATE $\mathcal{J} = \{ \mathcal{V}^{s}_{d} \subseteq \mathcal{V}_d : |\mathcal{V}^{s}_{d}| > 0 \}$
        \STATE $\text{sort}(\mathcal{J}, \text{by } |\mathcal{V}^{s}_{d}|)$
        \FOR{$\mathcal{V}^{s}_{d} \in \mathcal{J}$}
            \STATE $\hat{\mathcal{J}} = \{ \hat{\mathcal{V}^{s}_{d}} \subseteq \mathcal{V}^{s}_{d} : |\hat{\mathcal{V}^{s}_{d}}| > 0 \}$
            \STATE $\text{sort}(\hat{\mathcal{J}}, \text{by } |\hat{\mathcal{V}^{s}_{d}}|)$
            \STATE $x^{max} = \max \{x^v : v \in \mathcal{V}^{s}_{d}\}$
            \FOR{$v\in \mathcal{V}$}
                \FOR{$\hat{\mathcal{V}^{s}_{d}} \subseteq \hat{\mathcal{J}}$}
                    \IF{$\left(\mathcal{H}_{(v,\hat{\mathcal{V}^{s}_{d}})} \,+\, \mathcal{H}_{(v,\mathcal{V}^{s}_{d} \setminus \hat{\mathcal{V}^{s}_{d}})}\right) \neq +\infty$}
                        \STATE $\mathcal{T}_{tmp} = \mathcal{T}_{(v, \hat{\mathcal{V}^{s}_{d}})} \cup \mathcal{T}_{(v, \mathcal{V}^{s}_{d} \setminus \hat{\mathcal{V}^{s}_{d}})}$;

                        \FOR{$e^m_n \in \mathcal{T}^{tmp}$}
                            \STATE $f^{tmp}_{(m,n)} = \max(f_{(m,n,v, \hat{\mathcal{V}^{s}_{d}})}, f_{(m,n,v, \mathcal{V}^{s}_{d} \setminus \hat{\mathcal{V}^{s}_{d}})})$;
                        \ENDFOR
                        \STATE $c = \sum_{e^{m}_{n} \in \mathcal{T}^{tmp}} f^{tmp}_{(m,n)}$;

                        \IF{$c\leq \mathcal{H}_{(v,\mathcal{V}^{s}_{d})}$}
                            \STATE $\mathcal{H}_{(v,\mathcal{V}^{s}_{d})}=c$;
                            \STATE $\mathcal{T}_{(v,\mathcal{V}^{s}_{d})}=\mathcal{T}^{tmp}$;
                        \ENDIF
                    \ENDIF
                \ENDFOR
            \ENDFOR
            \FOR{$v\in \mathcal{V}$}
                \STATE $\mathcal{T}$, $f$, $\mathcal{H} = $ ShortestVideo$(v,\mathcal{V}^{s}_{d})$;
            \ENDFOR
        \ENDFOR
        \RETURN $\mathcal{T}_{(\mathcal{V}_s,\mathcal{V}_d)}$, $f_{(\mathcal{V}_s,\mathcal{V}_d)}$, $\mathcal{H}_{(\mathcal{V}_s,\mathcal{V}_d)}$. 
    \end{algorithmic}
\end{algorithm}

\section{Simulation Results and Discussion}
\subsection{Simulation Settings}
In this part, simulation experiments are conducted to evaluate the performance of the proposed OST on OCMFP. The links' unit costs are randomly with $e_{(i,j)} \sim U(0,1)$, and the required video resolution ratio of each destination node is defined with $x^i \sim \text{Discrete}\left(\{1, 0.5, 0.25\}, \left\{\frac{1}{3}, \frac{1}{3}, \frac{1}{3}\right\}\right)
$ which representing $\{4k,2k,1k\}$ of resolution ratio, desperately. We consider the following benchmarks for comparison.
\begin{itemize}
    \item MST: Minimum Spanning Tree(MST) is crucial in combinatorial optimization, connecting all vertices of a weighted graph with the minimum sum of edge weights.\cite{mst}
    \item DA: Dijkstra's Algorithm(DA) is a powerful and reliable method for solving the single-source shortest path problem in graphs with non-negative weights.
    \item GA: Genetic Algorithm (GA) is a type of optimization algorithm inspired by the principles of natural selection and genetics. It is used for solving complex optimization and search problems by mimicking the process of biological evolution.\cite{10293821}
    \item ACO: Ant Colony Optimization (ACO) is a nature-inspired metaheuristic algorithm developed for solving complex optimization problems.
    \item BCO: Bee Colony Optimization (BCO) is a type of swarm intelligence algorithm inspired by the foraging behavior of honeybee swarms and is used for various optimization problems, including combinatorial optimization, function optimization, scheduling, and data clustering.
\end{itemize}
In the simulation, all the algorithms are set to use dynamic pruning during the merging process except MST which just use the maximum requirement for each link flow.
\subsection{Performance Evaluation on Scalability}

In Figure \ref{fig:node} and Figure \ref{fig:smallnode}, comparisons between different algorithms across varying network sizes highlight their scalability and efficiency. The OST algorithm consistently exhibits the lowest transmission costs, demonstrating notable scalability and effectiveness, particularly in larger networks where efficient resource management becomes crucial. OST's performance advantage grows more pronounced as network size increases, emphasizing its suitability for extensive networks that demand effective cost management. Conversely, although OST maintains superiority in smaller-scale networks, the relative performance gap narrows, indicating that the complexity overhead introduced by alternative algorithms is relatively insignificant at smaller scales, thereby reducing the observed differences in transmission costs.

\begin{figure}[h]
    \centering
    \includegraphics[width=0.7\linewidth]{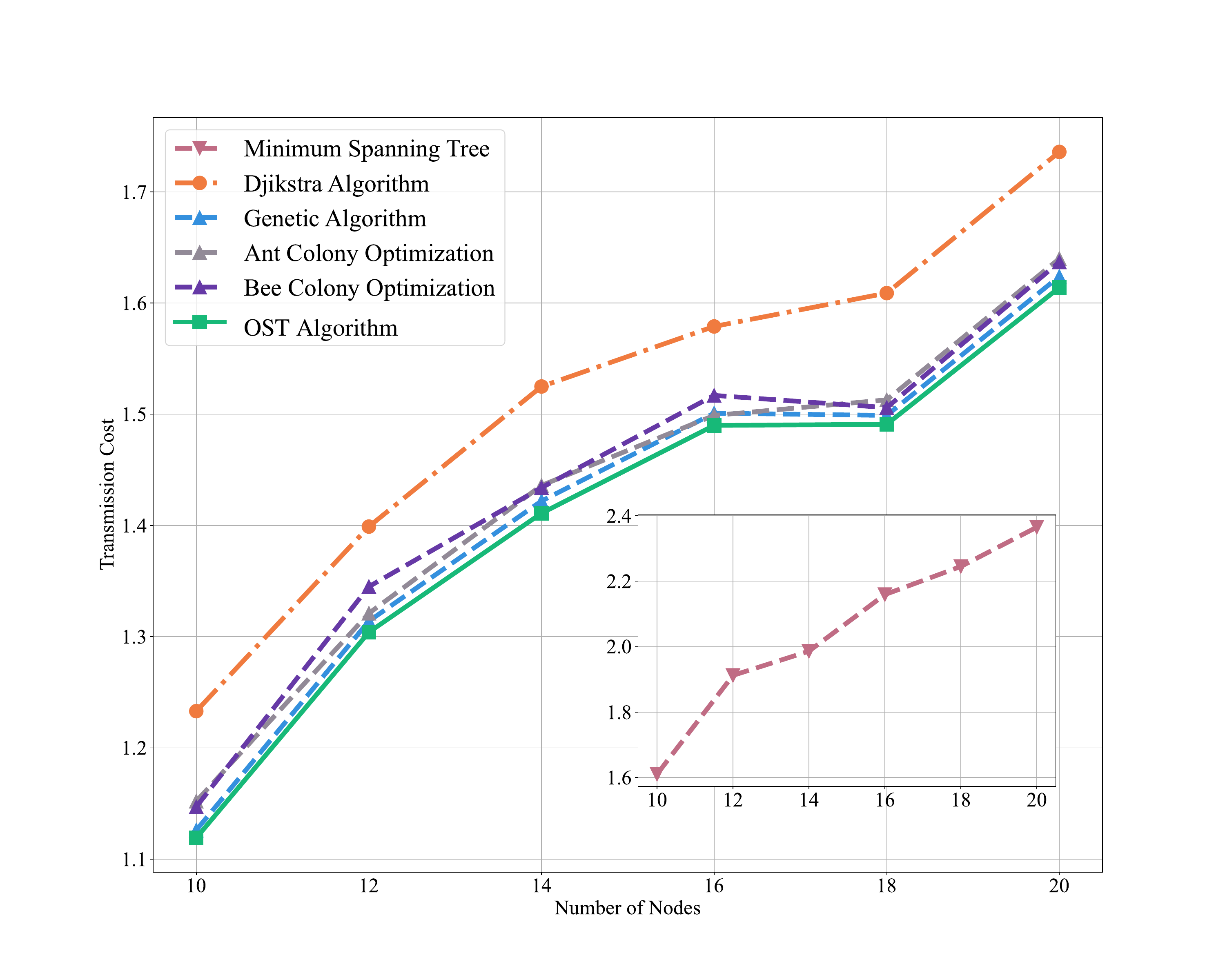}
    \caption{Total Weight Flow on Various Node Number of Different Algorithms}
    \label{fig:node}
\end{figure}

\begin{figure}[h]
    \centering
    \includegraphics[width=0.7\linewidth]{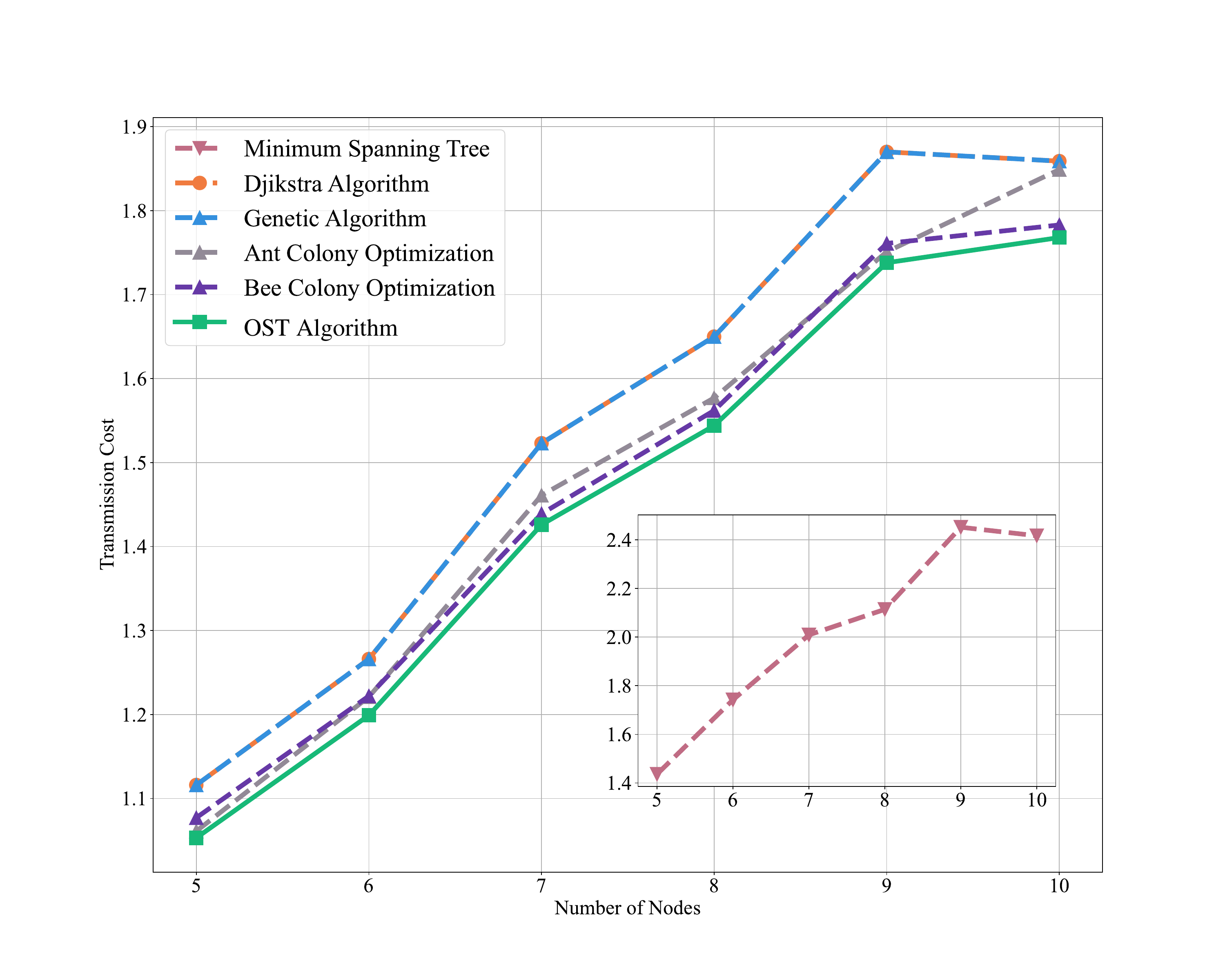}
    \caption{Total Weight Flow on Various Node number of Smaller Size of Different Algorithms}
    \label{fig:smallnode}
\end{figure}

Figures \ref{fig:avrdg} and \ref{fig:degree} illustrate the impact of node connectivity on transmission costs, further reinforcing the advantages of OST. Here, OST consistently maintains lower transmission costs compared to other algorithms as the average degree of connectivity increases. Its robustness in managing networks with higher node connectivity demonstrates superior efficiency, making it particularly beneficial in densely interconnected networks where transmission costs would typically rise sharply. This ability to sustain performance under conditions of high connectivity underlines OST's strength and suitability for complex networking scenarios requiring effective cost optimization.

\begin{figure}[h]
    \centering
    \includegraphics[width=0.7\linewidth]{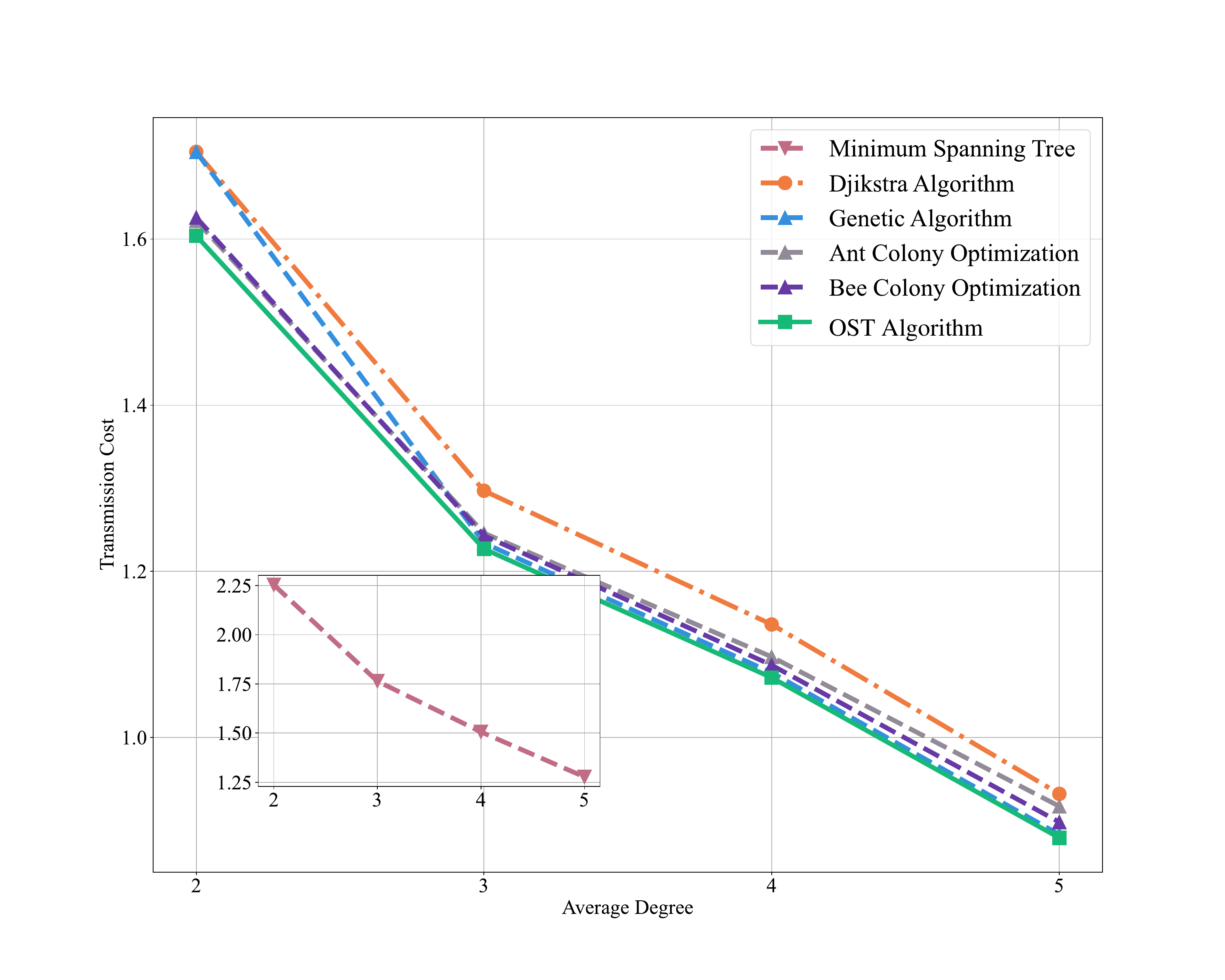}
    \caption{Total Weight Flow on Various Average Degrees of Different Algorithms}
    \label{fig:avrdg}
\end{figure}

\begin{figure}[h]
    \centering
    \includegraphics[width=0.7\linewidth]{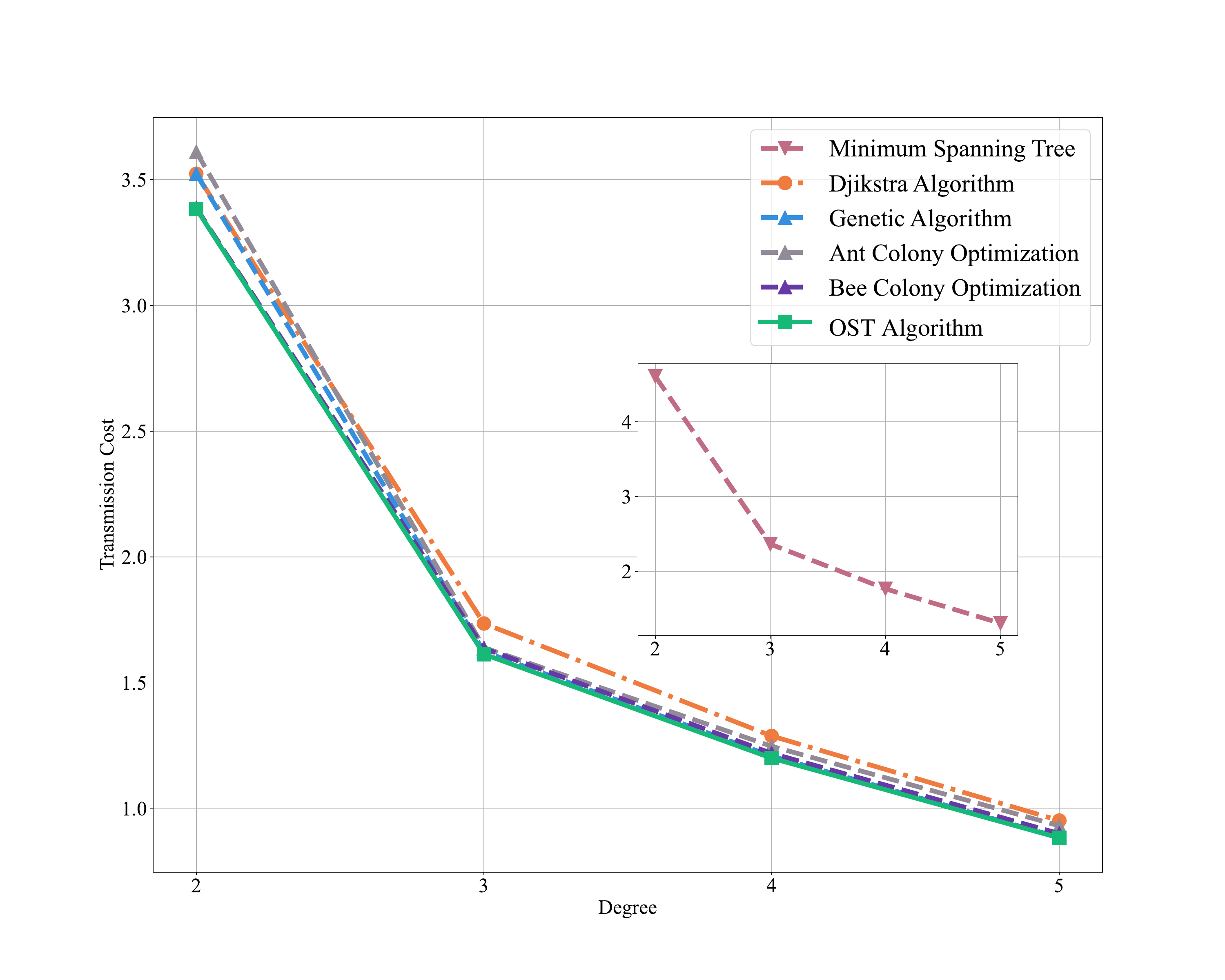}
    \caption{Total Weight Flow on Various Degrees of Different Algorithms}
    \label{fig:degree}
    \vspace{-12pt}
\end{figure}

Figures \ref{fig:user} and \ref{fig:var} further emphasize OST's adaptability and efficiency under various network conditions. As user numbers increase, signifying heightened data loads and concurrent sessions, OST effectively minimizes transmission costs, highlighting its particular effectiveness in managing heavy traffic demands. Additionally, in scenarios characterized by varying network topology and uneven connectivity, OST maintains its superior performance. Its flexibility and adaptability in handling unpredictable network environments enable it to maintain lower transmission costs consistently, distinguishing it from other algorithms that may struggle under such variable conditions.

\begin{figure}[h]
    \centering
    \includegraphics[width=0.7\linewidth]{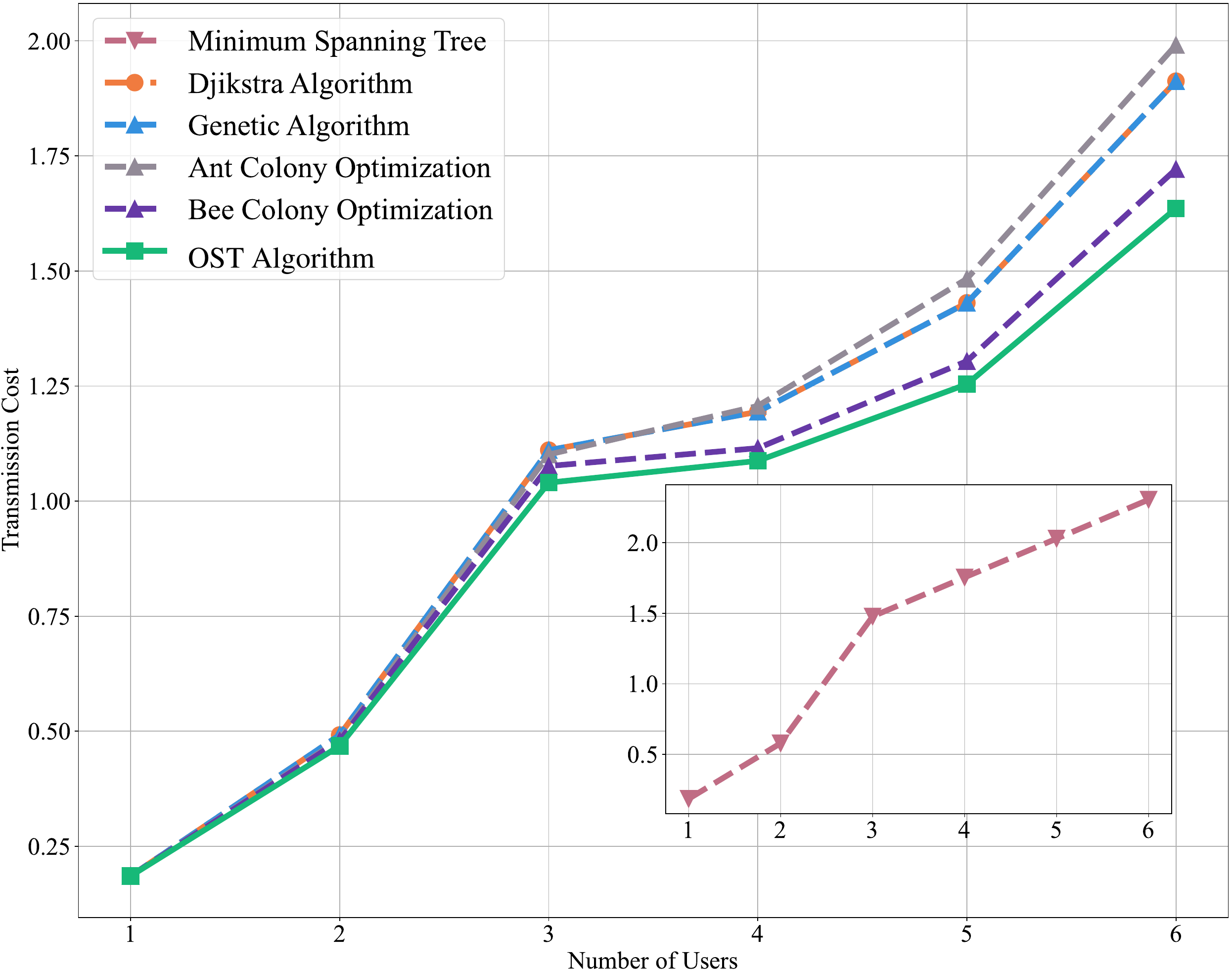}
    \caption{Total Weight Flow on Various User Number of Different Algorithms}
    \label{fig:user}
\end{figure}

\begin{figure}[h]
    \centering
    \includegraphics[width=0.7\linewidth]{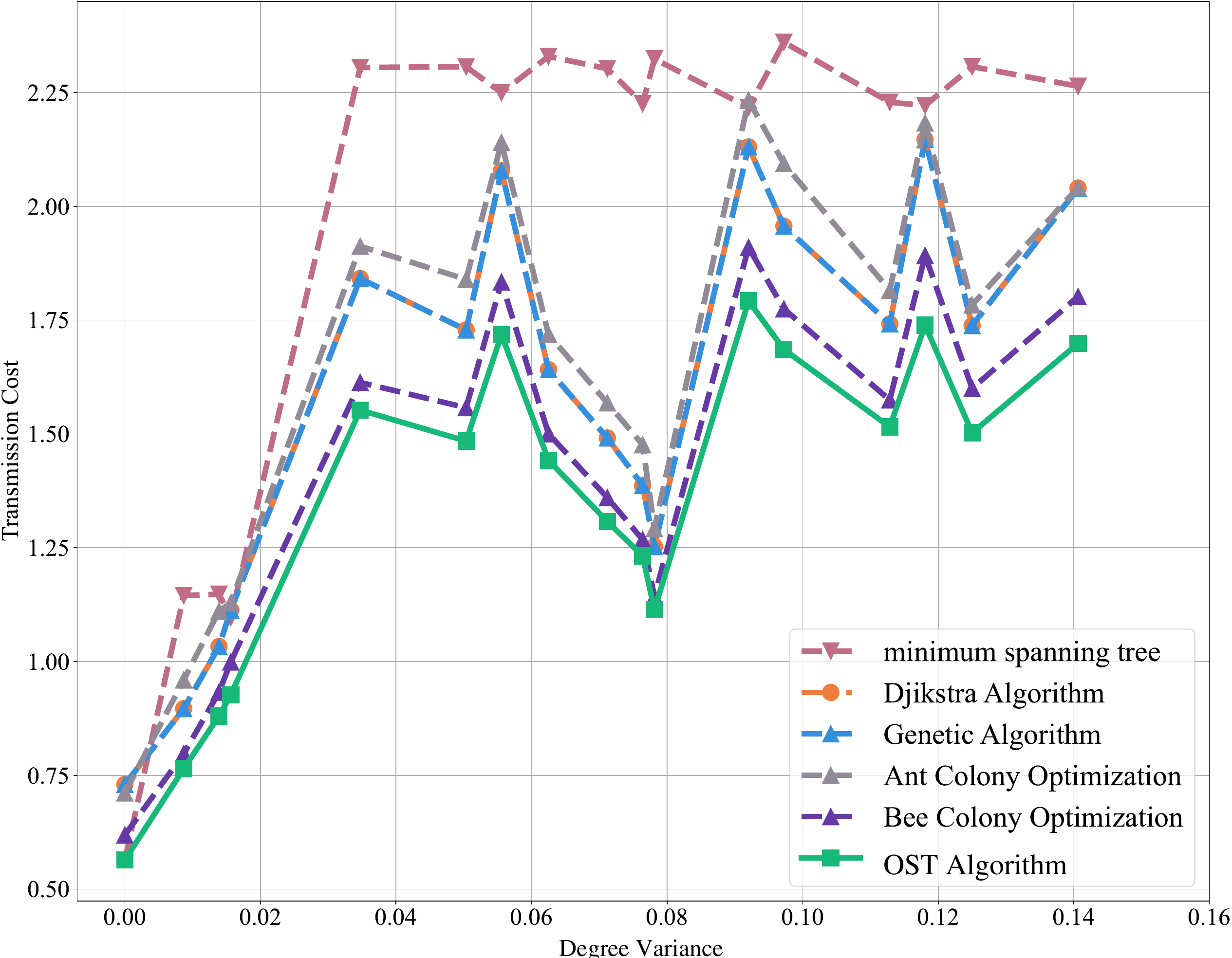}
    \caption{Total Weight Flow on Various Variance of Different Algorithms}
    \label{fig:var}
\end{figure}
Overall, OST outperforms other algorithms across most evaluated scenarios, particularly under conditions of increased network complexity, connectivity, and user load. MST also emerges as a robust solution for large-scale networks, providing efficient and scalable cost management as network size grows. However, OST remains the optimal choice for handling highly interconnected, high-traffic, or variably connected networks due to its inherent adaptability and consistently superior resource management capabilities, making it the preferred algorithm in scenarios demanding dynamic adjustments and optimal cost efficiency.

\section{Conclusion}
This paper presented the On-demand Steiner Tree (OST) method for solving the minimum flow problem in networks with heterogeneous outflow requirements. By combining dynamic programming with Steiner tree optimization, OST overcomes the limitations of traditional approaches that either ignore flow multiplexing or assume homogeneous demands. Experimental results demonstrate OST’s 10\%+ reduction in network flow compared to existing methods while precisely meeting diverse QoS requirements. The algorithm’s robustness across varying network topologies and demand patterns makes it particularly suitable for 6G multimedia delivery. Future work will investigate OST’s integration with SDN controllers and scalability in ultra-dense network deployments.

\section*{Acknowledge}
This work was supported by the National Key Research and Development Program of China (2022YFB2901900).

\bibliography{ref}
\bibliographystyle{IEEEtran}
\ifCLASSOPTIONcaptionsoff
  \newpage
\fi

\end{document}